\DeclareMathOperator*{\minim}{minimize}
\DeclareMathOperator*{\maxim}{maximize}
\newtheorem{theorem}{\bf{ Theorem}}
\newtheorem{definition}{\bf{ Definition}}
\newtheorem{remark}{\bf{Remark}}
\newcommand{\qed}{\nobreak \ifvmode \relax \else
  \ifdim\lastskip<1.5em \hskip-\lastskip
  \hskip1.5em plus0em minus0.5em \fi \nobreak
  \vrule height0.75em width0.5em depth0.25em\fi}
\newlength{\totlinewidth}
\newcounter{substep}
\newlength{\aligntop}
\newlength{\alignbot}
\renewenvironment{align}{%
  \vspace{\aligntop}
  \start@align\@ne\st@rredfalse\m@ne
}{%
  \math@cr \black@\totwidth@
  \egroup
  \ifingather@
    \restorealignstate@
    \egroup
    \nonumber
    \ifnum0=`{\fi\iffalse}\fi
  \else
    $$%
  \fi
  \ignorespacesafterend%
  \vspace{\alignbot}\par\noindent
} \makeatother
\begin{document}
\title{\huge Context-Aware Scheduling of Joint Millimeter Wave and Microwave Resources for Dual-Mode Base Stations}\vspace{0em}
\author{
\authorblockN{Omid Semiari$^{\dag}$, Walid Saad$^{\dag}$, and Mehdi Bennis$^\ddag$}\\\vspace*{0em}
\authorblockA{\small $^{\dag}$Wireless@VT, Bradley Department of Electrical and Computer Engineering, Virginia Tech, Blacksburg, VA, USA, \\Emails: \protect\url{{osemiari,walids}@vt.edu}\\
\small $^\ddag$ Centre for Wireless Communications, University of Oulu, Finland, Email: \url{bennis@ee.oulu.fi}
}\vspace*{-3.1em}
    \thanks{This research was supported by the U.S. National Science Foundation under Grants CNS-1460316 and CNS-1526844.}%
  }
%
\maketitle
\begin{abstract}
One of the most promising approaches to overcome the drastic channel variations of millimeter wave (mmW) communications is to deploy dual-mode base stations that integrate both mmW and microwave ($\mu$W) frequencies. Reaping the benefits of a dual-mode operation requires scheduling mechanisms that can allocate resources efficiently and jointly at both frequency bands. In this paper, a novel resource allocation framework is proposed that exploits users' context, in terms of user application (UA) delay requirements, to maximize the quality-of-service (QoS) of a dual-mode base station. In particular, such a context-aware approach enables the network to dynamically schedule UAs, instead of users, thus providing more precise delay guarantees and a more efficient exploitation of the mmW resources. The scheduling of UAs is formulated as a one-to-many matching problem between UAs and resources and a novel algorithm is proposed to solve it. The proposed algorithm is shown to converge to a two-sided stable matching between UAs and network resources. Simulation results show that the proposed approach outperforms classical CSI-based scheduling in terms of the per UA QoS, yielding up to $36 \%$ improvement. The results also show that exploiting mmW resources provides significant traffic offloads reaching up to $43 \%$ from $\mu$W band. \vspace{-.2cm}
\end{abstract}
\section{Introduction} \vspace{-0cm}

Communication at high frequency, millimeter wave (mmW) bands is an effective way to boost the performance of 5G cellular networks \cite{6736746,Rangan14}. However, field measurements \cite{Rangan14} have shown that the availability of mmW links can be highly intermittent, due to blockage by various obstacles. Therefore, meeting quality-of-service (QoS) constraints of delay-sensitive applications, such as HDTV and video conferencing, is challenging at mmW frequencies \cite{shokri,7010537,7010536,5783993, 4689210,7010539,Rangan14}.   


To provide a robust and reliable communication, mmW networks must coexist with small cell LTE networks that operate at the conventional microwave ($\mu$W) band \cite{7010537,7010536,5783993, 4689210,7010539}. In \cite{7010537}, the authors analyze transceiver architectures for dual-mode mmW-$\mu$W networks. The work in \cite{5783993} proposes a mmW-$\mu$W dual-mode architecture used to transmit control and data signals, respectively, at $\mu$W and mmW frequency bands. 

The problem of QoS provisioning for mmW is studied in \cite{7010536,4689210}, and \cite{7010539}. In \cite{7010536}, the authors propose a scheduling scheme that integrates device-to-device mmW links with 4G system to bypass the blocked mmW links. The work in \cite{4689210} presented a mmW system at $60$ GHz for supporting uncompressed high-definition (HD) videos for WLANs. In \cite{7010539}, the authors defined and evaluated important metrics to characterize multimedia QoS, and designed a QoS-aware multimedia scheduling scheme to achieve the trade-off between performance and complexity.

 Although interesting, the first body of work in \cite{7010536,7010537,5783993} does not address the QoS provisioning in mmW-$\mu$W networks. Moreover, \cite{7010537,4689210}, and \cite{7010539} do not consider multi-user scheduling and multiple access in dual-mode networks. In addition, conventional scheduling mechanisms, such as \cite{7010536,4689210}, and \cite{7010539}, identify each user equipment (UE) by a single traffic stream with a certain QoS requirement. In practice, however, recent trends show that users run multiple applications simultaneously, each with a different QoS requirement. Even though the applications at a single device experience the same wireless channel, they may tolerate different delays and QoS, thus, resulting in different user's quality of experience. Accounting for precise, application-specific QoS metrics is particularly important for scheduling mmW resources whose channel is highly variable. In fact, conventional scheduling approaches fail to guarantee the QoS for multiple applications at a single UE.

 With regard to QoS provisioning in dual-mode mmW-$\mu$W networks, it is worthy to note that 1) traffic management mandates a joint scheduling that allocates resources in both frequency bands, and 2) the QoS constraint per user application (UA) can dictate whether the traffic should be served via mmW resources, $\mu$W resources, or both. Such knowledge of the \textit{user's application context information}, is required for a robust and efficient scheduling in networks that incorporate dual-mode small cell base stations (SBSs).



The main contribution of this work is to propose a novel, context-aware resource allocation framework that intelligently allocates mmW and $\mu$W resources of a dual-mode SBS, depending on the specific delay constraints of UAs. This proposed context-aware scheduler allows each user to seamlessly run multiple applications simultaneously, each with certain QoS constraint. We formulate the problem as a two-sided matching game that aims to allocate time-frequency resources to UAs. To solve the game, we propose a novel distributed algorithm that allows UAs to submit requests for network resources based solely on their local information, i.e., tolerable delay and currently perceived network state. We show that the proposed algorithm yields a two-sided stable resource allocation to UAs. Simulation results show that the dual-band scheduler provides significant performance advantages, in terms of traffic offload, efficient mmW exploitation, and improved overall UA delay.


The rest of this paper is organized as follows. Section II presents the problem formulation. Section III presents the proposed matching solution. Simulation results are analyzed in Section IV. Section V concludes the paper.

\section{System Model}
Consider the downlink of a dual-mode small base station (SBS) that operates at microwave ($\mu$W) and millimeter wave (mmW) frequency bands. The coverage area of the SBS is a planar area with radius $r$ centered at $(0,0) \in \mathbb{R}^2$. Moreover, there are $M$ UEs in the set $\mathcal{M}$, distributed randomly and uniformly within the SBS coverage. UEs are equipped with both mmW and $\mu$W RF interfaces which allow them to manage their traffic at both frequency bands. In addition, a UE at distance $d$ from the SBS will experience a LoS mmW connection with probability $\rho$ if $d \leq r$, otherwise, $\rho=0$. 

The antenna arrays of mmW transceivers allow to achieve an overall gain of $\psi(x_1,x_2)$ for a UE located at $(x_1,x_2) \in \mathbb{R}^2$ \cite{Ghosh14}. On the other hand, the transceivers at $\mu$W frequency have conventional single element omni-directional antennas. Furthermore, each UE $m \in \mathcal{M}$ runs $\kappa_m$ UAs. We let $\mathcal{A}$ be the set of all UAs with $A=\sum_{m \in \mathcal{M}}\kappa_m$ as the total number of UAs across all UEs. 

 \subsection{Multiple Access at mmW and $\mu$W Frequency Bands}
 
At mmW band, time division multiple access (TDMA) is used to schedule UAs at time slots of duration $\tau_1$ \cite{7010536}. At each mmW time slot, the SBS will transmit the bits associated with one UA over $K_1$ resource blocks (RBs). To model large-scale channel effects at mmW links, we use the popular model of \cite{Ghosh14}:
 \begin{align}\label{mwpathloss}
 L_1(x_1,x_2)=\beta_1+\alpha_1 10\log_{10}(\sqrt{x_1^2+x_2^2})+\chi,
 \end{align}
 where $L_1(x_1,x_2)$ is the path loss at mmW frequencies for all UAs associated with a UE located at $(x_1,x_2) \in \mathbb{R}^2$. In fact, (\ref{mwpathloss}) is known to be the best linear fit to the propagation measurement in mmW frequency band\cite{Ghosh14}, where $\alpha_1$ is the slope of the fit and $\beta_1$, the intercept parameter, is the pathloss (in dB scale) for $1$ meter of distance. In addition, $\chi$ models the deviation in fitting (in dB scale) which is a Gaussian random variable with zero mean and variance $\xi_1^2$. Overall, the total achievable rate for UA $a$ at time slot $j$ is given by  
\begin{align}\label{rate1}
 R_a(j)\!=\!\!\sum_{k=1}^{K_1}\!w_1\!\!\log_2\!\left(\!\!1\!\!+\!\frac{p_{k,1}\psi(x_1,x_2)|h_{kj}|^2 10^{-\frac{L_1(x_1,x_2)}{10}}}{w_1 N_0}\!\right),
 \end{align}
where $w_1$ is the bandwidth of each RB, $h_{kj}$ is the Rayleigh fading channel coefficient at RB $k$ of slot $j$, and $N_0$ is the noise power. The total transmit power at mmW band, $P_1$, is assumed to be distributed uniformly among all RBs, i.e., $p_{k,1}=P_1/K_1$.
\begin{figure}
\centering
\centerline{\includegraphics[width=7cm]{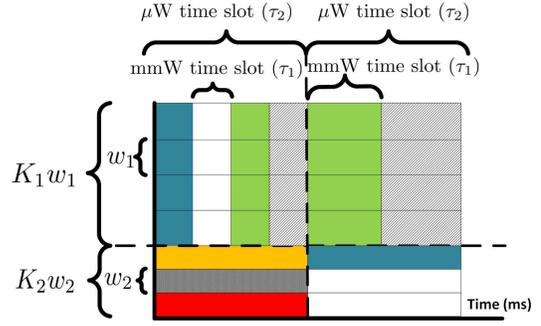}}\vspace{-0.2cm}
\caption{Example of resource allocation of the dual-band configuration. Colors correspond to different UAs that may run at different UEs.}\vspace{-.5cm}
\label{model}
\end{figure}
For the $\mu$W band, we consider orthogonal frequency division multiple access (OFDMA) scheme in which multiple UAs can be scheduled over $K_2$ RBs in the set $\mathcal{K}_2$ at each $\mu$W time slot with duration $\tau_2$. Therefore, the achievable rate for an arbitrary UA $a$ at RB $k$ and time slot $t$ is:
\begin{align}\label{rate2}
R_a(k,t)=w_2\log_2\left(1+\frac{p_{k,2}|g_{kt}|^2 10^{-\frac{L_2(x_1,x_2)}{10}}}{w_2N_0}\right).
\end{align}
where $w_2$ denotes the bandwidth of each RB at $\mu$W band, and $g_{kt}$ is the Rayleigh fading channel at RB $k$ at time-slot $t$. Moreover, the total transmit power at $\mu$W band, $P_2$, is assumed to be distributed uniformly among all RBs, i.e., $p_{k,2}=P_2/K_2$. The path loss $L_2(x_1,x_2)$ follows the log-distance model, similar to (\ref{mwpathloss}), with parameters $\alpha_2$, $\beta_2$, and $\xi_2$ adapted for $\mu$W band.  

Hereinafter, unless otherwise specified, we use ``\emph{time slot'' to refer to the $\mu$W time slot}. The scheduler allocates resources to UAs at the beginning of each time slot which remains unchanged for the next $\tau_2$ seconds. Since mmW operates at very high frequencies, its channel coherence time will be relatively smaller than that of the $\mu$W frequencies \cite{7010539}. Therefore, we let $\tau_1=\tau_2/J(t)$, where $J(t)$ is the number of UAs that will be scheduled in the mmW band at time slot $t$. 


The proposed dual-band multiple access scheme is shown in Fig. \ref{model}, where each color identifies a single UA.  
\subsection{Traffic Model}
We assume a non-full buffer traffic model, where each UA has $B$ bits of data to transmit. Once $B$ bits are transmitted, the corresponding UA will be removed from the scheduling. 
Each UA has an application-specific tolerable delay which specifies its QoS class. 
\begin{definition}
The \textit{QoS class}, $\mathcal{A}_t$, is defined as the set of all UAs over all UEs that can tolerate a packet transmission delay of $t$ time slots.
\end{definition} 

Our system has a total of $T$ QoS classes, $\bigcup_{t=1}^T \mathcal{A}_t=\mathcal{A}$, and $\mathcal{A}_t \cap \mathcal{A}_{t'}=\emptyset, t \neq t'$. Due to system resource constraints, not all UAs can be served instantly, thus,  access delay may occur to UAs.  In fact, to transmit a data stream of size $B$ bits to UA $a \in \mathcal{A}_t$, an average data rate of $B/t\tau_2$ during $t$ consecutive time slots is needed. The context information for all UAs is captured by the set $\mathcal{C}=\{\mathcal{A}_1,\mathcal{A}_2,...,\mathcal{A}_T\}$.


\subsection{Problem Formulation}
At each time slot $t$, a scheduling decision $\pi_{t}$ assigns time-frequency resources to UAs, at both mmW and $\mu$W bands. The scheduler takes the context information $\mathcal{C}$ and achievable rates $R_a(k,t)$ and $R_a(j)$, for all $a \in \mathcal{A}$, $k=1,...,K_2$, and $j=1,...,J(t)$, as inputs and outputs integer variables $x_{akt}$ and $y_{ajt}$. $x_{akt}=1$, if RB $k$ at time slot $t$ is allocated to UA $a$ and $x_{akt}=0$, otherwise. In addition, $y_{ajt}=1$, if mmW time slot $j$ at slot $t$ is allocated to UA $a$, otherwise $y_{ajt}=0$. 

The scheduling decision at a given slot $t$ depends on scheduling decisions at previous time slots $\{\pi_1, \pi_2, ..., \pi_{t-1}\}$. Thus, we define $\pi=\{\pi_1, \pi_2, ..., \pi_t, ...,\pi_M \} \in \boldsymbol{\Pi}$ as a \textit{scheduling policy}, where $\boldsymbol{\Pi}$ is the set of all possible scheduling policies. For a given policy $\pi$, the average rate for UA $a$ after time slot $t$, $\bar{R}_{\pi}(a,t)$, is
\begin{align}\label{eqRbar}
\bar{R}_{\pi}(a,t)\!\!=\!\!\frac{1}{t\tau_2}\!\sum_{t'=1}^{t}\!\!\left[\tau_2\sum\limits_{k=1}^{K_2} R_a(k,t')x_{akt'}\!+\!\tau_1\sum\limits_{j=1}^{J(t')} R_a(j)y_{ajt'}\right]\!.
\end{align}
Next, we use (\ref{eqRbar}) to formally define $\mathbbm{1}_a(\pi)$ as the QoS criterion for $a \in \mathcal{A}_t$ as follow
\begin{align}\label{lambda2}
\mathbbm{1}_a(\pi) = \begin{cases}
0             & \text{if}\,\, \bar{R}_{\pi}(a,t)\ge\frac{B}{t\tau_2},\\
1             & \text{otherwise},
\end{cases}
\end{align}
where $\mathbbm{1}_a(\pi)=0$ indicates that enough resources are allocated to UA $a$, to receive $B$ bits within $t$ slots.

Directional transmissions at mmW band compel the SBS to steer the beam toward one UE at any given mmW time slot \cite{6736746}. Hence, without prior information of the channel for all UAs, the scheduler cannot schedule UAs based on the achievable rates, $R_a(j)$. Therefore, the scheduler follows an opportunistic round-robin (RR) scheme\footnote{\vspace{-1cm} Other random access schemes can be readily accommodated in our model.} at mmW band to allocate time slots to UAs. Therefore, the dual-mode SBS must be able to exploit the $\mu$W resources in order to meet the QoS constraints of UAs, i.e., to minimize $\sum_{t=1}^{T}\sum_{a \in \mathcal{A}_t} \mathbbm{1}_a(\pi)$. 

To this end, we focus on a subset of scheduling policies $\pi \in \boldsymbol{\Pi}^c \subseteq \boldsymbol{\Pi}$ that schedules UAs in $\mathcal{A}_t$ at mmW band for the first $t-1$ time slots. Furthermore, the scheduler satisfies their required average rate, $\bar{R}_{\pi}(a,t)$, by efficiently allocating the resources of $\mu$W band at time slot $t$ to UAs in $\mathcal{A}_t$. Hence, given $\pi \in \boldsymbol{\Pi}^c$, we can write (\ref{eqRbar}) as 
\begin{align}\label{eqRbar2}
\bar{R}_{\pi}(a,t)\!=\!\frac{1}{t\tau_2}\!\left[\tau_2\sum\limits_{k=1}^{K_2} R_a(k,t)x_{akt}+B_{\pi}(a,t-1)\right]\!,
\end{align}
where $B_{\pi}(a,t-1)$ is the total number of transmitted bits for UA $a$ up until time slot $t$, i.e., 
\begin{align}\label{eqRbar3}
B_{\pi}(a,t-1)= \tau_1\sum_{t'=1}^{t-1}\sum\limits_{j=1}^{J(t')} R_a(j)y_{ajt'}.
\end{align}
Now, we formulate the problem as follows, for all $t=1,...,T$:
\begin{IEEEeqnarray}{rCl}\label{eq4}
&&\minim_{\pi \in \boldsymbol{\Pi}^c} \,\, \sum_{t=1}^{T}\sum_{a \in \mathcal{A}_t} \mathbbm{1}_a(\pi), \IEEEyessubnumber\label{1a}\\
\textrm{s.t.} 
&& \sum_{a=1}^{A}B_{\pi}(a,T) \leq B_{\text{tot}},\IEEEyessubnumber\label{1c}\\
&& \sum_{j=1}^{J(t)}y_{ajt} \leq 1,\forall a \in \mathcal{A},\IEEEyessubnumber\label{1d}\\
&& \sum_{a=1}^{A}y_{ajt}\leq 1,  j=1,...,J(t),\IEEEyessubnumber\label{1e}\\
&& \sum_{k=1}^{K_2}x_{akt} \leq K_2, \forall a \in \mathcal{A}, \IEEEyessubnumber\label{1f}\\
&& \sum_{a=1}^{A}x_{akt}\leq 1,k=1,...,K_2, \IEEEyessubnumber\label{1g}\\
&& \sum_{k=1}^{K_2}\sum_{j=1}^{J(t)}x_{akt}y_{ajt}=0,\forall a \in \mathcal{A}.\,\IEEEyessubnumber\label{1h}
\end{IEEEeqnarray}
In (\ref{1a}), the objective is to maximize the QoS for UAs, using both mmW and $\mu$W resources. From (\ref{lambda2}), we can see that the objective function incorporates the context information $\mathcal{C}$. (\ref{1c}) implies that the total transmitted bits after time slot $T$, $B_{\pi}(M)$, must be less or equal to the total load $B_{\text{tot}}=B \cdot A$. (\ref{1d})-(\ref{1e}) ensure orthogonal time-slot allocation for the mmW band. (\ref{1f})-(\ref{1g}) guarantee orthogonal RB allocation at $\mu$W with OFDMA. Furthermore, (\ref{1h}) implies that a single UA cannot be simultaneously assigned to both mmW and $\mu$W bands.  
Next, we propose a framework to solve the optimization problem. 

\section{Context-Aware Scheduling as a Matching Game}
In (\ref{eqRbar2}) and (\ref{1a}), we observe that $B_{\pi}(a,t-1)$ is the only required information from previous time slots to schedule UA $a \in \mathcal{A}_t$ at time slot $t$. Thus, at each time slot $t$ we have:
\begin{remark}
The optimal scheduling decision $\pi_t^*$ at time slot $t$ is the solution of
\begin{IEEEeqnarray}{rCl}\label{lem2}
&&\maxim_{\pi_t \in \boldsymbol{\Pi}^c} \,\, \sum_{a \in \mathcal{A}_t}\sum_{k=1}^{K_2}R_a(k,t)x_{akt}  \IEEEyessubnumber\label{lem2a}\\
\textrm{s.t.} 
&& \sum_{k=1}^{K_2}\!R_a(k,t)x_{akt}\!  \leq \!B\!-\!B_{\pi}(a,\!t\!-\!1),\,\,\,\,\,\,\forall a \in \mathcal{A}_t,\IEEEyessubnumber\label{lem2b}\\
&& (\ref{1c})-(\ref{1h}).\IEEEyessubnumber\label{lem2e}
\end{IEEEeqnarray} 
\end{remark}
The downlink scheduling problem in (\ref{lem2a})-(\ref{lem2e}) is a combinatorial problem of matching users to resources which does not admit a closed-form solution and has an exponential complexity\cite{4036195}. 


\subsection{Scheduling as a Matching Game: Preliminaries}
To solve the resource allocation problem in (\ref{lem2a})-(\ref{lem2e}), we propose a novel solution based on matching theory, a mathematical framework that provides a decentralized solution with tractable complexity for combinatorial problems, such as the one in (\ref{lem2a})-(\ref{lem2e}) \cite{Roth92,eduard11}. A \emph{matching game} is defined as a two-sided assignment problem between two disjoint sets of players in which the players of each set are interested to be matched to the players of the other set, according to \textit{preference relations}. At each time slot $t$ of our scheduling problem, $\mathcal{K}_2$ and $\mathcal{A}_t$ are the two sets of players. A preference relation $\succ$ is defined as a complete, reflexive, and transitive binary relation between the elements of a given set. Here, we let $\succ_a$ be the preference relation of UA $a$ and denote $k\succ_a k'$, if player $a$ prefers RB $k$ more than $k'$. Similarly, we use $\succ_k$ to denote the preference relation of RB $k \in \mathcal{K}_2$.

In the proposed scheduling problem, the preference relations of UAs depend on both the rate and the QoS constraint. Matching theory allows to specify preference relations, by defining individual utility functions for UAs and SBS resources. In our scheduling game, the SBS will naturally control the preferences of all resources.

\subsection{Dual-mode Scheduling as a Matching Game}
Each scheduling decision $\pi_t$ determines the allocation of RBs to UAs at time slot $t$. Thus, the scheduling problem can be defined as a \textit{one-to-many matching game}:
\begin{definition}
Given two disjoint finite sets of players $\mathcal{A}_t$ and $\mathcal{K}_2$, the scheduling decision at time slot $t$, $\pi_t$, can be defined as a \textit{matching relation}, $\mu_t:\mathcal{A}_t  \rightarrow \mathcal{K}_2$ that satisfies 1) $\forall a \in \mathcal{A}_t, \mu_t(a) \subseteq \mathcal{K}_2$, 2) $\forall k \in \mathcal{K}_2, \mu_t(k) \in \mathcal{A}_t$, and 3) $\mu_t(k)=a$, if and only if  $k \in \mu_t(a)$.
\end{definition}    
In fact, $\mu_t(k)=a$ implies that $x_{akt}=1$, otherwise $x_{akt}=0$. Therefore, $\mu_t$ can be viewed as a scheduling decision $\pi_t$ that determines the allocation at $\mu$W band. One can easily see from the above definition that the proposed matching game inherently satisfies the constraints in (\ref{1f})-(\ref{1g}). Next, we need to define suitable utility functions to determine the preference profiles of UAs and RBs. Given matching $\mu_t$, we define the utility of UA $a$ for $k \in \mathcal{K}_2$ at time slot $t$ as:
{\fontsize{8.95}{20}
\begin{align}\label{util1}
\Psi_a(k,t';\mu_t)\!\!=\!
&\begin{cases}
0             &\text{if}\,\, a \notin \mathcal{A}_t \,\,\,\,\text{or}\\
&\sum\limits_{k' \in \mu_t(a)}\!\!\!R_a(k',t)\tau_2\!>\!B-B_{\pi}(a,t'-1),\\
R_a(k,t) &\text{otherwise.} 
\end{cases}
\end{align}}
The utility of $\mu$W RBs $k \in \mathcal{K}_2$ for UA $a \in \mathcal{A}_t$ is simply the rate:
\begin{align}\label{util2}
\Phi_k(a,t)=R_a(k,t).
\end{align}
Using these utilities, the preference relations of UAs and RBs at a given time slot $t$ are:
\begin{IEEEeqnarray}{rCl}\label{prefer}
k \succ_a k' &&\Leftrightarrow \Psi_a(k,t;\mu_t) \geq \Psi_a(k',t;\mu_t) \label{prefer1}\\
a \succ_k a' &&\Leftrightarrow \Phi_k(a,t) \geq \Phi_k(a',t) \label{prefer2}, 
\end{IEEEeqnarray}
for $\forall a, a' \in \mathcal{A}$, and $\forall k,k' \in \mathcal{K}_2$. We note that (\ref{util1}) and (\ref{prefer1}) depend on the context information $\mathcal{C}=\{\mathcal{A}_1,\mathcal{A}_2,...,\mathcal{A}_T\}$, while (\ref{util2}) and (\ref{prefer2}) rely only on the channel state information. Thus, the SBS will not need to know the delay tolerance of each UA, making the matching game suitable for distributed implementations.  
\subsection{Proposed Context-aware Scheduling Algorithm}
{\fontsize{8}{20}
\begin{table}[!t]
\centering
\caption{
	\vspace*{-0em}Proposed Context-Aware Scheduling Algorithm}\vspace*{-0.9em}
\begin{tabular}{p{7.5 cm}}
\hline \vspace*{-0em}
\textbf{Inputs:}\,\,$\mathcal{C}$, $\mathcal{A}$, $\mathcal{K}_2$, $B$.\\
\hspace*{1em}\textit{Initialize:}   \vspace*{0em}
$t=1$; $B_{\pi}(a,0)=0$; $\mathcal{K}_a=\mathcal{K}_2,\,\,\, \forall a \in \mathcal{A}$.\\
\hspace*{-0.5em} \While{$t \leq T$}{\eIf{$a \in \mathcal{A}_t$}{
		\begin{itemize}
		\item[1.] Find $R_a(k,t)$ for $\forall k \in \mathcal{K}_2$ and $B_{\pi}(a,t-1)$.
		\item[2.] Update the preference ordering of UAs and RBs, using (\ref{prefer1}), (\ref{prefer2}).
		\item[3.] Using $\succ_a$, UA $a$ applies for the most preferred RB in $\mathcal{K}_a$.
		\item[4.] Each RB $k$ accepts the most preferred UA, based on $\succ_k$, among new applicants plus $\mu_t(k)$, and rejects the rest. Next, $k$ is removed from applicants' $\mathcal{K}_a$ sets.
		\item[5.] Each UA $a$ calculates $\bar{R}_{\pi}(a,t)$ and updates $\succ_a$.
		\end{itemize}
		\textbf{repeat} steps 2 to 5 \textbf{until} $\mathbbm{1}_{a}(\pi)\!=\!0$ or $\mathcal{K}_a\!=\!\emptyset$, $\forall a \in \!\mathcal{A}_t$. 
		\begin{itemize}
		\item[6.] Data transmission occurs for each UA $a$ over $\mu_t(a)$ RBs.
		\end{itemize}}{\begin{itemize}
		\item[I.] UAs $a \!\!\in\!\! \bigcup_{t'>t}\mathcal{A}_{t'}$ send a request to SBS for mmW resource.
		\item[II.] SBS sets $J(t)$ to the number of received requests (from UAs with corresponding LoS UEs) and adjusts $\tau_1=\tau_2/J(t)$.
		\item[III.] SBS allocates mmW time-slots $j=1,...,J(t)$ to applicants based on RR and updates $y_{ajt}$ variables.
		\item[IV.] If $y_{ajt}=1$, SBS transmits data at time-slot $j$ to UA $a$. UA updates $B_{\pi}(a,t)$.
		\end{itemize}}
	$t\gets t+1$}\vspace*{-0cm}
\hspace*{0em}\textbf{Output:}\,\,Stable scheduling policy $\pi^*$\vspace*{0em}\\
\hline
\end{tabular}\label{tab:algo}\vspace{-0.5cm}
\end{table}}

To solve the proposed game and find a suitable outcome, we use the concept of two-sided \textit{stable matching} between UAs and RBs, defined as follows \cite{Roth92}:
\begin{definition}
A pair $(a,k) \notin \mu_t$ is said to be a \textit{blocking pair} of the matching $\mu_t$, if and only if $a \succ_{k} \mu_t(k)$ and $k \succ_a \mu_t(a)$.
Matching $\mu_t$ is \textit{stable}, if there is no blocking pair.
\end{definition}
Under a stable matching $\mu_t$, one can ensure that the scheduler will not reallocate the RBs to other UAs. In fact, stability is a key requirement for distributed scheduling to ensure that UAs will not deviate from the solution that guarantees the QoS.

For conventional matching problems, the popular \emph{deferred acceptance (DA)} algorithm is used to find a stable matching \cite{Roth92,eduard11}. However, DA cannot be applied directly to our problem because it assumes that the quota for each UA is fixed. The quota is the maximum number of RBs that a UA can be matched to. In our problem, however, quotas cannot be predetermined, since the number of required RBs to satisfy the QoS constraint of a UA, in (\ref{lem2b}), depends on the channel quality at each RB.


In fact, the adopted utility functions in (\ref{util1}) depend on the current state of the matching. Due to the dependency of UAs' preferences to the state of the matching, i.e. $x_{akt}$ variables, the proposed game can be classified as a \textit{matching game with externalities}. For matching games with externalities, DA may not converge to a two-sided stable matching. Therefore, a new algorithm must be found to solve the problem.

To this end, we propose the context-aware scheduling algorithm shown in Table \ref{tab:algo}. At each slot $t$, UAs apply for mmW and $\mu$W resources based on their local information. Steps 1-6 find the stable matching $\mu_t$ for UAs that must be scheduled at $\mu$W band. Steps I-IV use RR to allocate mmW slots to UAs. For this algorithm, we have:
\begin{theorem}
The proposed algorithm in Table \ref{tab:algo} yields a two-sided stable matching between UAs and $\mu$W RBs.
\end{theorem}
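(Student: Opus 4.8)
The plan is to follow the skeleton of the classical deferred-acceptance stability argument, but to isolate and then exploit the special structure of the externality in (\ref{util1}), so that the reasoning survives the two features that prevent a direct appeal to DA: the UAs' preferences are matching-dependent, and their quotas are not fixed in advance. I would split the proof into two parts: (i) the loop in Steps 2--5 terminates after finitely many applications, and (ii) the matching $\mu_t$ it returns admits no blocking pair and is therefore stable.

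For termination, I would first record that the RB preferences $\succ_k$ in (\ref{prefer2}) are determined solely by the rates $R_a(k,t)$ through (\ref{util2}), and these rates are fixed throughout slot $t$; hence each $\succ_k$ is a \emph{static} order. In Step 4 an RB always retains the $\succ_k$-best UA among its current holder $\mu_t(k)$ and the new applicants, and permanently removes $k$ from the candidate set $\mathcal{K}_a$ of every rejected UA. Consequently each UA applies to any given RB at most once, so the number of applications in Steps 2--5 is bounded by $|\mathcal{A}_t|\,K_2$. Since the loop halts as soon as every $a\in\mathcal{A}_t$ is either satisfied ($\mathbbm{1}_a(\pi)=0$) or has $\mathcal{K}_a=\emptyset$, finiteness follows.

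For stability I would argue by contradiction, supposing a blocking pair $(a,k)\notin\mu_t$ at termination, i.e. $a\succ_k\mu_t(k)$ and $k\succ_a\mu_t(a)$. The crux is to read ``$k\succ_a\mu_t(a)$'' through (\ref{util1}): a UA strictly values an additional RB only while it is unsatisfied, because once $\sum_{k'\in\mu_t(a)}R_a(k',t)\tau_2>B-B_{\pi}(a,t-1)$ its utility for every RB drops to $0$. Hence the blocking condition forces $\mathbbm{1}_a(\pi)=1$ at termination, and the halting rule then forces $\mathcal{K}_a=\emptyset$; that is, $a$ was rejected by every RB in $\mathcal{K}_2$, in particular by $k$. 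At the round when $k$ rejected $a$ its held UA was $\succ_k a$, and since $\succ_k$ is static while Step 4 only ever upgrades the holder, the final match satisfies $\mu_t(k)\succeq_k a$, contradicting $a\succ_k\mu_t(k)$.

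I expect the main obstacle to be justifying that the externality does not break this deferred-acceptance reasoning, since for general matching games with externalities DA need not yield a stable outcome. The point to make rigorous is that the externality here is \emph{one-directional}: while a UA is unsatisfied its ranking of RBs coincides with the static order by rate, and the sole effect of the matching on its preferences is a saturation that suppresses all remaining RBs once the QoS target is met. Because this change never reshuffles the relative order of RBs a UA still desires and never revives demand for a previously rejected RB, the standard invariants---UAs propose in decreasing preference order while each RB's holder improves monotonically in a fixed order---remain valid. A secondary subtlety to handle is that a satisfied UA may later be bumped from an RB by a higher-rate applicant, so its satisfaction status can regress during the run; I would note that the stability argument invokes only the termination state (unsatisfied $\Rightarrow\mathcal{K}_a=\emptyset$) together with the monotone improvement of each RB's holder, and both hold irrespective of such intermediate regressions.
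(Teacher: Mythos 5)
Your proof is correct and follows essentially the same deferred-acceptance argument as the paper's: convergence because no UA ever applies to the same RB twice (so in the worst case $\mathcal{K}_a=\emptyset$ for all $a$), then stability by contradiction, showing any would-be blocking pair $(a,k)$ is ruled out either because $a$ is satisfied or because $a$'s rejection by $k$ implies $k$'s holder is preferred to $a$. Your reorganization---eliminating the satisfied case upfront via the saturation structure of (\ref{util1}) instead of the paper's explicit two-case analysis ($\mathbbm{1}_a=0$ or $\mathcal{K}_a=\emptyset$), and spelling out the static-RB-preference and monotone-holder invariants that the paper leaves implicit---is a refinement of, not a departure from, the paper's proof.
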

\begin{proof}
Since at any arbitrary time slot $t$, UAs $a \in \mathcal{A}_t$ are involved in the matching game and $\mathcal{A}_t \bigcap \mathcal{A}_t'=\emptyset, t \neq t'$, it suffices to prove the two-sided stability of the matching algorithm at time slot $t$. The convergence of the algorithm in Table \ref{tab:algo} at each slot is guaranteed, since a UA never applies for a certain RB twice. Hence, in the worst case, all UAs will apply for all RBs once, which yields $\mathcal{K}_{a}=\emptyset, \forall a \in \mathcal{A}$. Next, we show that once the algorithm converges, the resulting matching between UAs and RBs is two-sided stable. Assume that there exists a pair $(a,k) \notin \mu_t$ that blocks $\mu_t$. Since the algorithm has converged, we can conclude that at least one of the following cases is true about $a$: $\mathbbm{1}_a=0$ , or $\mathcal{K}_{a}=\emptyset$.

The first case, $\mathbbm{1}_a=0$ implies that $a$ does not need to add more RBs to $\mu_t(a)$. In addition, $a$ would not replace any of $k' \in \mu(a)$ with $k$, since $k' \succ_{a} k$. Otherwise, $a$ would apply earlier for $k$. If $a$ has applied for $k$ and got rejected, this means $\mu(k) \succ_k a$, which contradicts $(a,k)$ to be a blocking pair. Analogous to the first case, $\mathcal{K}_{a}=\emptyset$ implies that $a$ has got rejected by $k$, which means $\mu(k) \succ_k a$ and $(a,k)$ cannot be a blocking pair. This proves the theorem.
\end{proof}
\section{Simulation Results}\label{sec:sim}
For simulations, we consider an area with diameter $r=500$ meters with the SBS located at the center. UEs are distributed uniformly within this area with a minimum distance of $5$ meters from the SBS. Each UE has $\kappa$ UAs chosen randomly and uniformly from $T$ QoS classes. The main parameters are summarized in Table \ref{tabsim}. All statistical results are averaged over a large number of independent runs.{\fontsize{7}{20}
\begin{table}[!t]
\centering
\caption{
	\vspace*{-0em}Simulation parameters}\vspace*{-0em}
\begin{tabular}{|c|c|c|}
\hline
\bf{Notation} & \bf{Parameter} & \bf{Value} \\
\hline
$P_1,P_2$ & Transmit power & $1$ W\\
\hline
$(\Omega_1,\Omega_2)$ & Bandwidth & ($1$ GHz, $10$ MHz)\\
\hline
$\omega_1,\omega_2$ & Bandwidth per RB& $480$ KHz\\
\hline
($\xi_1, \xi_2$) & Standard deviation of mmW path loss& ($5.2, 10$) \cite{Ghosh14} \\
\hline
($\alpha_1, \alpha_2$) & Path loss exponent& ($2,3$) \cite{Ghosh14}\\
\hline
($\beta_1, \beta_2$) & Path loss at $1$ m& ($70,38$) dB \\
\hline
$\psi$ & Antenna gain& $18$ dB \\
\hline
$\tau_2$ & Time slot at $\mu$W band& $10$ ms \\
\hline
$N_0$ & Noise power& $-174$ dBm/Hz \\
\hline
\end{tabular}\label{tabsim}\vspace{-.3cm}
\end{table}}
We compare the performance with a ``CSI-based'' scheduler that relies only on the channel quality and disregards context information. The CSI-based scheduler assumes that all applications per UE can tolerate the same delay equal to the minimum of all corresponding UAs' delays. We define the performance metric $\lambda$ as the average number of QoS violations:
\begin{align}\label{lambda}
\lambda = \frac{1}{A}\sum_{t=1}^{T}\sum_{a \in \mathcal{A}_t}\mathbbm{1}_a(\pi).
\end{align}
Due to the stochastic nature of wireless channels, we are interested in the statistics of $\lambda$, i.e., to evaluate whether 
$P(\lambda \geq \lambda_{th}) \leq \epsilon,$
where $\lambda_{th}$ is the maximum tolerable $\lambda$ and $\epsilon$ is a small probability. This can be written as $F_{\lambda}(\lambda_t) \geq 1-\epsilon$, where $F_{\lambda}(.)$ is the cumulative distribution function (CDF) of $\lambda$.

Figs. \ref{sim1}(a) and \ref{sim1}(b) show the total amount of traffic transmitted at both frequency bands versus time for LoS probabilities of $\rho=0.5$ and $\rho=0.1$, respectively. Moreover, $T=5$, $M=20$, $\kappa=5$, and $B=0.5$ Mbits are considered. We observe that the traffic decreases over time, due to the finite buffer traffic model, as well as the fact that different classes of UAs are equally likely to be run at UEs. The result in Fig. \ref{sim1} shows that the proposed algorithm exploits more mmW resources as more UEs are at LoS from SBS. In addition, the dual-band scheduling significantly increases traffic offload from $\mu$W band, reaching up to $43 \%$ at the fourth time slot.     

\begin{figure}
\centering
\centerline{\includegraphics[width=8.6cm]{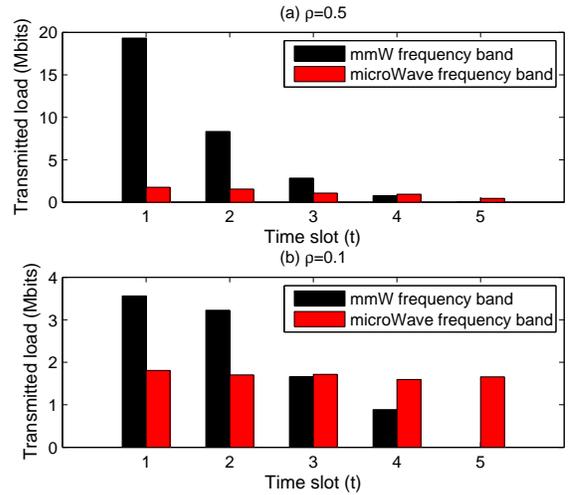}}\vspace{-0.4cm}
\caption{Comparison of the transmitted load at each frequency band vs time slot, plotted for $\rho=0.5$ and $\rho=0.1$.}\vspace{-0.5cm}
\label{sim1}
\end{figure}
\begin{figure}
\centering
\centerline{\includegraphics[width=8.6cm]{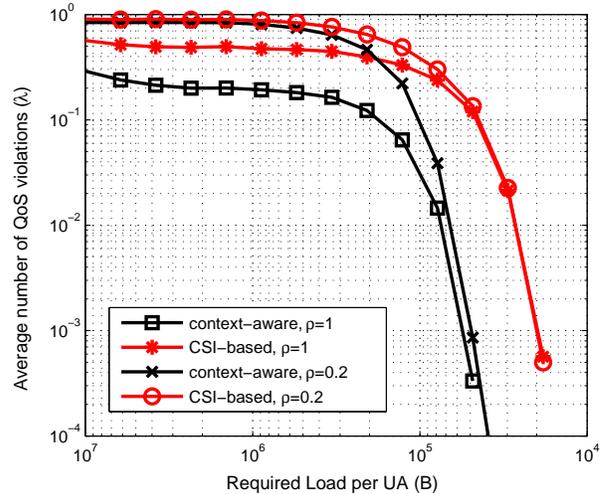}}\vspace{-0.4cm}
\caption{Performance comparison of context-aware and CSI-based scheduling.}\vspace{-0.4cm}
\label{sim3}
\end{figure}

Fig. \ref{sim3} compares the performance of the proposed context-aware approach with CSI-based scheduling for $M=10$, $\kappa=3$, $T=5$. Fig. \ref{sim3} shows the average number of QoS violations $\lambda$ as a function of the required load (number of bits $B$ that must be transmitted per UA). In Fig. \ref{sim3}, we see that $\lambda$ decreases with $B$, since more UAs meet their QoS constraint. Moreover, for $\lambda_t=0.01$, Fig. \ref{sim3} shows that the proposed approach can serve up to $60$ kbits more traffic per UA compared to the CSI-based approach. For $M=5$ and $\tau_2=10$ ms, this is equivalent to a $2.4$ Mbps improvement in the average data rate. It worth to note that for $\lambda<0.01$, the performance gain of the proposed approach against  CSI-based approach remains unchanged. In addition, for low loads, the result become independent of the probability of LoS, since UAs are scheduled in the $\mu$W band.

Fig. \ref{sim4} shows the performance comparison in terms of the CDF of $\lambda$, $F_{\lambda}(\lambda)$, for $M=20$, $\kappa=3$, $T=5$, $B=0.1$ MBits, and $\rho=0$. In Fig. \ref{sim4}(a), the $10$ farthest UEs from the SBS are chosen as cell edge UEs and the statistics of $\lambda$ for their corresponding UAs are shown. In contrast in Fig. \ref{sim4}(b), we choose the $10$ UEs that are the nearest to the SBS as cell center UEs and the statistics of $\lambda$ for their corresponding UAs are shown. With no LoS connection available for UEs, we observe that the performance of the cell edge UAs is poor for both approaches. However, Fig. \ref{sim4}(b) shows that the context-aware approach achieves more gain when cell center UAs are considered. For instance, using CSI-based approach, the probability of less than $10 \%$ of UAs be unsatisfied is $1\%$, while this probability is $37 \%$ for context-aware approach. 
\begin{figure}
\centering
\centerline{\includegraphics[width=8.6cm]{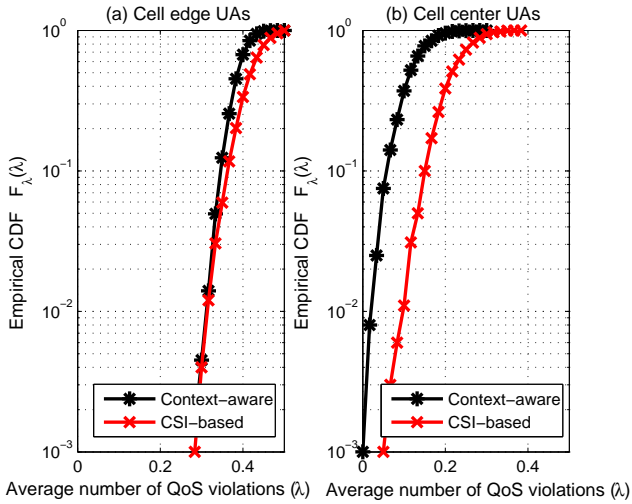}}\vspace{-0.3cm}
\caption{Performance comparison between context-aware and CSI-based scheduling for $\rho=0$, i.e., no LoS UE.}\vspace{-0.2cm}
\label{sim4}
\end{figure}

Fig. \ref{sim5} shows the same performance metric as Fig. \ref{sim4}, with the same parameters except $\rho=0.3$. Owing to available capacity at mmW band, the performance of both cell edge and cell center UAs are significantly improved compared to Fig. \ref{sim4}. Moreover, we observe that the performance gap between the proposed context-aware approach with CSI-based approach increases as $\rho$ increased. For instance, the proposed approach improves the QoS up to $90 \%$ (i.e. satisfying more UAs) for the cell edge UAs, for $F_{\lambda}(\lambda)=0.01$.
\begin{figure}
\centering
\centerline{\includegraphics[width=8.6cm]{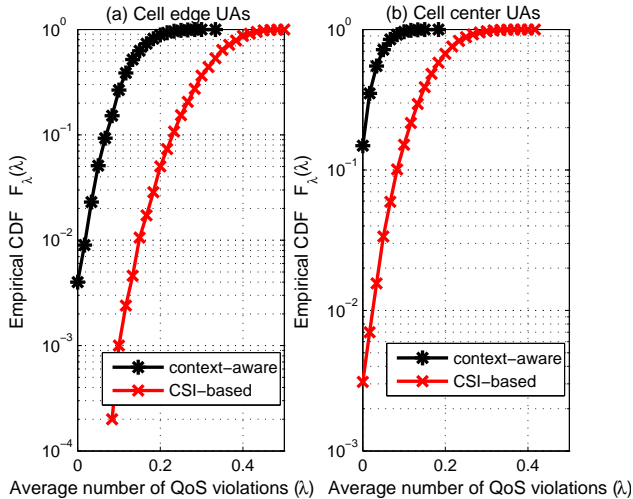}}\vspace{-0.3cm}
\caption{Performance comparison between context-aware and CSI-based scheduling for $\rho=0.3$.}\vspace{-0.4cm}
\label{sim5}
\end{figure}

The average number of iterations per time slot versus network size is shown in Fig. \ref{sim6} for $B=0.5$ Mbits, $M=5$, and $\kappa=3$. The average number of iterations increases linearly with the number of UEs. In addition, we note that the average number of iterations decreases as $\rho$ increases. This is due to the fact that more traffic can be offloaded to mmW band which enhances resource allocation at $\mu$W band. Overall, Fig. \ref{sim6} shows that the proposed algorithm converges within a reasonable number of iterations even for large number of UEs.
\begin{figure}
\centering
\centerline{\includegraphics[width=8.6cm]{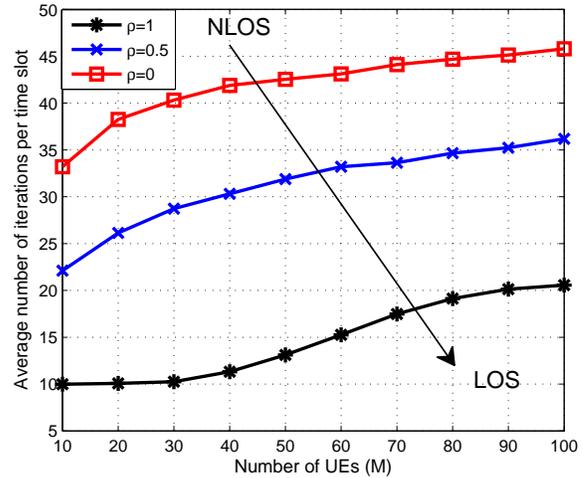}}\vspace{-0.4cm}
\caption{Number of iterations for the proposed context-aware scheduling versus the number of UEs.}\vspace{-.5cm}
\label{sim6}
\end{figure}

\section{Conclusions}
In this paper, we have proposed a novel context-aware scheduling framework for dual-mode small base stations operating at mmW and $\mu$W frequency bands. The proposed scheduler can provide delay guarantees per user application. We have formulated this context-aware scheduling problem as a one-to-many matching game which is then solved using a distributed algorithm. The proposed algorithm exploits mmW band resources for opportunistic traffic offloads, while guaranteeing the UAs' QoS. We have proved that the proposed algorithm yields a two-sided stable scheduling policy. Simulation results have shown the various merits and performance advantages of context-aware scheduling for dual-mode networks.
\bibliographystyle{ieeetr}
\bibliography{references}
\end{document}